\documentclass[journal]{IEEEtran}
\usepackage{blindtext}
\usepackage{graphicx}
\usepackage{enumerate}

\ifCLASSINFOpdf
\else
\fi
%
%

%
\usepackage[cmex10]{amsmath}
\usepackage{amsfonts}
\usepackage{breqn}

%
\interdisplaylinepenalty=2500
\hyphenation{op-tical net-works semi-conduc-tor}

\DeclareMathOperator*{\argmin}{arg\,min}
\DeclareMathOperator*{\argmax}{arg\,max}
\begin{document}
%
\title{On Matched Metric and Channel Problem}
%
%
%

\author{Artur~Poplawski}
\maketitle

\begin{abstract}
The sufficient condition for partial function from the cartesian square of the finite 
set to the reals to be "compatible" with some metric on this set is given.
It is then shown, that when afforementioned set and function are respectively a space of binary words 
of length $n$ and probalities of receiving some word after sending the other word through Binary Assymetric Channel
the condition is satisfied so the required metrics exist.
This proves under the slightly weaker definition of matched metric and channel conjecture stated in \cite{mmc} 
\end{abstract}

\begin{IEEEkeywords}
metric, channel.
\end{IEEEkeywords}

%
\IEEEpeerreviewmaketitle

\section{Introduction}
In \cite{mmc} authors consider following problem originating  
from Information Theory: for which channel models there 
is metric $d$ on the space of the codewords such that following matching 
condition holds for each codewords x, y, z:
$ Pr(x | y ) > Pr(x | z ) $ if and only if $d(x,y) < d(x,z) $
Here $P(x | y )$ means probablity of the receiving codeword $x$ assuming 
that codeword $y$ was sent.
\cite{mmc} gives extensive overview of the history, current literature and 
state of knowledge on the subject. It also proves existence of such a metric 
in case of so called Z-channels and for the codewords of length 2 and 3 
in the case of Binary Assymetric Channel (BAC).
Finally authors of \cite{mmc} state the conjecture that such a compatible 
metric exists for the space of the codewords 
of arbitrary length  $n$ for BAC.

In section II existence of  metrics compatible to certain function will be considered 
(in slightly more general case comparing to \cite{mmc}). 
In this case necessary and sufficient condition for existence 
of such a metrics will be given. In section III we will discuss relation between 
the theorem from II and information theoretic 
case and also with some of the notions introduced in \cite{mmc}. Section IV 
applies result from II to the case of BAC and proves (with some corrections discussed in III) 
conjecture from \cite{mmc}.


\section{Compatible metrics}
Let $X$ be a finite set and let $$S \subset X^{2}-\Delta$$ where $$\Delta=\{(x,x) : x \in X\}$$
Let $f$ be a function:
$$ f: S \rightarrow \mathbb{R} $$ We have follwiong theorem:
\newtheorem{theorem}{Theorem}
\newtheorem{remark}{Remark} 
\newtheorem{lemma}{Lemma} 
\newtheorem{definition}{Definition}

\begin{theorem}
If for each $n$ and sequence $x_{0}, \ldots x_{n-1}$ such that: 
$$f((x_{0}, x_{n-1})) > f((x_{0}, x_{1}))$$
$$f((x_{1}, x_{0})) > f((x_{1},x_{2}))$$ 
$$\ldots$$ 
$$f((x_{n-2},x_{n-3})) > f((x_{n-2},x_{n-1}))$$
and $(x_{n-1},x_{n-2}) \in S$ and $(x_{n-1},x_{0}) \in S$ we have: 
$$f((x_{n-1},x_{n-2})) \leq f((x_{n-1},x_{0}))$$

then there exists metric $d$ on $X$ such that for all $x,y,z$ 
$f(x,y) > f(x,z) \Rightarrow d(x,y)<d(x,z) $
\end{theorem}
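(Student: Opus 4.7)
The plan is to translate the pointwise requirement into a combinatorial relation on the set $E$ of two-element subsets of $X$, verify that the theorem's hypothesis forces this relation to be acyclic, and then realize the relation by distances squeezed into a narrow interval so that the triangle inequality is automatic. Since any metric satisfies $d(x,y)=d(y,x)$, the implication $f(x,y)>f(x,z)\Rightarrow d(x,y)<d(x,z)$ depends only on unordered pairs. Define a relation $R$ on $E$ by declaring $\{v,a\}\mathrel{R}\{v,b\}$ whenever $a\neq b$ and $f(v,a)>f(v,b)$. The goal then becomes producing a positive symmetric function $d$ on $E$, extended by $d(x,x)=0$, that strictly respects $R$ and satisfies the triangle inequality.

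The crux is to show that $R$ is acyclic. Suppose there were a cycle $p_0\mathrel{R}p_1\mathrel{R}\cdots\mathrel{R}p_{m-1}\mathrel{R}p_0$. Consecutive pairs share a pivot vertex, and any stretch of steps that share the same pivot collapses to a single $f$-inequality at that pivot, so one may assume consecutive pivots differ. Then there are vertices $v_0,\ldots,v_{m-1}$ with $p_i=\{v_{i-1},v_i\}$ (indices cyclic mod $m$), and the pivot condition at $v_i$ reads $f(v_i,v_{i-1})>f(v_i,v_{i+1})$. Setting $x_i=v_i$ and $n=m$, the first $n-1$ of these inequalities are exactly the assumptions of the theorem, while the closing inequality $f(v_{m-1},v_{m-2})>f(v_{m-1},v_0)$ contradicts the conclusion $f(x_{n-1},x_{n-2})\leq f(x_{n-1},x_0)$ that the hypothesis guarantees. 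Hence no such cycle exists.

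With $R$ acyclic, extend it to a strict linear order on the finite set $E$ by topological sort and enumerate $E=\{p_1,\ldots,p_N\}$ so that $p_i\mathrel{R}p_j$ implies $i<j$. Choose constants $M>\varepsilon>0$ and set $d(p_i)=M+i\varepsilon/N$, together with $d(x,x)=0$. All distances are symmetric, positive, and lie in $[M,M+\varepsilon]$, so for any distinct $a,b,c$ one has $d(a,c)+d(b,c)\geq 2M>M+\varepsilon\geq d(a,b)$, which gives the triangle inequality; the ordering of values respects $R$ by construction. The step that deserves the most care is the reduction identifying an arbitrary cycle in $R$ with a cyclic sequence of the form the hypothesis rules out, namely the merging of consecutive steps that share a pivot so that all consecutive pivots become distinct and the pivots themselves play the role of the $x_i$ in the theorem's statement.
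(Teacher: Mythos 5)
Your proof is correct and follows essentially the same route as the paper: a relation on unordered pairs, acyclicity derived from the theorem's hypothesis via the sequence of pivot vertices, a linear extension, and distance values squeezed into a narrow band so the triangle inequality holds automatically (the paper delegates this last step to Lemma~7 of \cite{mmc}, which is the same squeeze). Your explicit collapsing of consecutive steps that share a pivot is a welcome extra bit of care at exactly the point where the paper's cycle-to-sequence translation is glossed over.
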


\begin{proof}
Let $U$ be the set of unordered pairs of elements of $X$ so:
$$
U = \{s \subset X : |s| = 2\}
$$ 
Let's define relation $R \subset U^2$ in the following manner:
for $a, b \in S^2$ $aRb$ if one of the two conditions below holds:
$$a = b$$ or 
$$a=\{x,y\},  b=\{x,z\}, y \neq z,  f(x,y) > f(x,z)$$

$R$ is reflexive (what is obvious) and antisymmetric. Let $\overline{R}$ be the 
transitive closure of the $R$.
It is, again, reflexive and by definition transitive. We will show that it is 
antisymmetric.
Let's assume that this is not true, so we have $a$ and $b$, such that $a \neq\ b$ 
and $a\overline{R}b$ and $b\overline{R}a$ but $a \neq b$.
Since $\overline{R}$ is transitive closure of $R$ we would have, that there 
exists $n$ and $m$ and two sequences (after appropriate allignement of idices:
$$
a = z_{0}, z_{2}, \ldots, z_{n} = b
$$
and
$$
b = z_{n}, z_{n+1}, \ldots, z_{n+m-1}, z_{n+m} = a = z_{0}
$$
such that
$$
z_{0} R z_{1}, z_{1} R z_{2}, \ldots z_{n+m-1} R z_{0} 
$$ 

From the definition of $R$ and because $z_{i} \in U$ we have for $0 \leq i < n+m$ 
$z_{i}=\{x_{i}, x_{(i-1) \bmod n+m}\}$ where $x_{i} = z_{i} \cap z_{(i+1) \bmod (n+m)}$

It means, that we have respectively:
$$f((x_{0}, x_{n+m-1})) > f((x_{0}, x_{1}))$$
$$f((x_{1}, x_{0})) > f((x_{1},x_{2}))$$ 
$$\ldots$$ 
$$f((x_{n+m-2},x_{n+m-3})) > f((x_{n+m-2},x_{n+m-1}))$$

so, by the assumption on $f$ we must have:

$$f((x_{n+m-1},x_{n+m-2}) \leq f((x_{n+m-1},x_{0}))$$

at the other hand, since $z_{n+m-2} R z_{n+m-1}$ we have:

$$f((x_{n+m-1},x_{n+m-2}) > f((x_{n+m-1},x_{0}))$$
what is contradiction.

This proves, that $\overline{R}$ is a antysymmetric so a partial order in $U$.
As a partial order it can be extended to total (linear) order $\overline{\overline{R}}$, and, 
since the $X$ so also $U$ is finite there is a function $g:U\rightarrow\mathbb{R}^{+}$ 
such that $x\overline{\overline{R}}y$ iff $g(x)<g(y)$ 
Now we will use the trick from \cite{mmc} Lemma 7 (for the self–containedness of the work, 
Appendix Lemma 1 of this note repeats statement and proof of the Lemma 7 from \cite{mmc}). 
Let's define the $e:X^{2} \rightarrow \mathbb{R}^{+}$  as 
$e(x,x) = 0$ and for $x \neq y$ $e(x,y) = g(\{x,y\})$. $e$ is symmetric and $e(x,y)=0$ iff $x=y$, 
so $e$ is semimetric. 
From \cite{mmc} Lemma 7 there is a metric $d$ such that $d(x,y) < d(x,z)$ if and only if 
$e(x,y) < e(x,z)$
 
This finishes the proof.
\end{proof}
\begin{remark}
The extension of the relation $R$ to partial order is a special case of the 
Suzumura's Extension Theorem see e.g. \cite{sl}
\end{remark}
\section{Information-theoretic context}
To bring results of II into the information-theoretic context, for certain channel 
model we will consider the partial function $f$  defined on subspace the space $X^2$ where 
$X$ is a space of codewords of the length $n$ by
$$
f(x,y) = Pr(x|y)
$$

There is one delicate point related to Theorem 1 and \cite{mmc} that needs to be discussed.
Elements of $a,b \in U$ (as defined in the proof) are incomparable by $R$ in three cases:
\begin{enumerate}
\item if $a \cap b = \emptyset $
\item $a=\{x,y\},  b=\{x,z\}, y \neq z$ but either $(x,y)\notin S$ or $(x,z)\notin S$ where S is domain of $f$
\item $a=\{x,y\},  b=\{x,z\}, y \neq z$ but $f(x,y)=f(x,z)$
\end{enumerate}

In information-theoretic interpretation of $f$ second case never happens.
For the third case, construction of the $\overline{\overline{R}}$ relation introduces order between such a 
$\{x,y\}$ and $\{x,z\}$ so, when we move to the construction of $d$  
there is implication $ Pr(x | y ) > Pr(x | z ) \Rightarrow  d(x,y) < d(x,z) $ but no implication 
$ d(x,y) < d(x,z) \Rightarrow Pr(x | y ) > Pr(x | z ) $
claimed in \cite{mmc}. Metric is still matched to the channel according to slightly modified 
Definition 1 from \cite{mmc}:

\begin{definition}
Let $W:X \rightarrow X$ be a channel with input and output alphabets $X$ and let $d$ be a metric on $X$
We say that $W$ and $d$ are matched if maximum for every code $C \subset X$ and every $x \in X$  
$\argmin_{y \in X-\{x\}} d(x,y) \subset \argmax_{y \in X-\{x\}} Pr(x|y) $  
where we interpret, that $\argmax$ returns list of size at least 1, not the single element.
\end{definition}

One (not essential) modification is that we require that range of $\argmax$ and $\argmin$ exclude $x$: 
this is because \cite{mmc} makes assumption that channel is reasonable, so $Pr(x|x)>Pr(x|y)$ whenever $x \neq y$
so without this exclusion operators trivially return $\{x\}$

More subtle is weakening the requirement  $\argmin_{y \in X-\{x\}} d(x,y) = \argmax_{y \in X-\{x\}} Pr(x|y)$ 
expressed in the same context in corresponding Definition 1 in \cite{mmc}. 
This one is essential. In fact in case of the metrics built as in the Theorem 1, 
expression $\argmin_{y \in X-\{x\}} d(x,y)$ will always evaluate to a single element list.

Some of the channels that do not have matched metrics in the sense of Defintion 1 from  \cite{mmc} do have 
in the sense of Defintion 1 as stated above.
It is such in the following case: Let $X=\{a,b,c\}$ and let

$$Pr(a|a)=Pr(b|b)=Pr(c|c)=\frac{1}{2}$$
$$Pr(a|b)=Pr(a|c)=\frac{1}{4}$$
$$Pr(b|a)=Pr(c|b)=\frac{1}{6}$$
$$Pr(b|c)=Pr(c|a)=\frac{1}{3}$$

Let's also observe, that in the case whe we assume, that $Pr(x|z)\neq Pr(x|t)$ whenever $x \neq z \neq t \neq x$ 
both definitions coincide. It would be interesting to further explore this relation between definitions 
in context of some perturbation argument, where we modify slighly the channel to assure condition above in 
consistent manner and go to the limit.

\section{Binary Assymetric Channel has matched meterics}
Now, let's apply theorem from previous section and prove following:
\begin{theorem}
Binary Assymetric Channel has matched metric (in sense of definition 1).
\end{theorem}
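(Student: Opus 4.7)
The plan is to apply Theorem~1 to the function $f(x,y) = Pr(x|y)$ with $X = \{0,1\}^n$ and $S = X^2 \setminus \Delta$, and then to verify that the metric it produces satisfies Definition~1. Let $p = Pr(1|0)$ and $q = Pr(0|1)$ denote the BAC crossover probabilities, both strictly between $0$ and $1$. Then $Pr(x|y)$ factors across bit positions, with the factor at position $k$ being one of $1-p$, $p$, $q$, $1-q$ according to the pair $(y_k, x_k)$. In particular every value of $f$ is strictly positive, so $f$ is defined on all of $S$.

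The heart of the argument is to verify the cycle hypothesis of Theorem~1. I plan to argue by contradiction: assume there exists $N$ and a sequence $x_0, \ldots, x_{N-1}$ (I use $N$ to avoid clash with the codeword length $n$) for which
$$Pr(x_i | x_{(i-1) \bmod N}) > Pr(x_i | x_{(i+1) \bmod N})$$
holds strictly for \emph{every} $i \in \{0, \ldots, N-1\}$ --- the last of these being precisely the negation of the desired conclusion $Pr(x_{N-1}|x_{N-2}) \leq Pr(x_{N-1}|x_0)$. Multiplying all $N$ inequalities and reindexing converts the resulting product inequality into
$$\prod_{i=0}^{N-1} Pr(x_{(i+1) \bmod N} | x_i) \;>\; \prod_{i=0}^{N-1} Pr(x_i | x_{(i+1) \bmod N}),$$
so it suffices to show that the ``forward'' and ``backward'' cyclic products are in fact equal.

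I verify this equality position by position. Fix a bit index $k$ and consider the cyclic binary sequence $(x_0^{(k)}, x_1^{(k)}, \ldots, x_{N-1}^{(k)})$. Let $A_k, B_k, C_k, D_k$ count the consecutive pairs $(x_i^{(k)}, x_{(i+1) \bmod N}^{(k)})$ equal to $(0,0)$, $(1,1)$, $(0,1)$, $(1,0)$ respectively. The forward factor at position $k$ equals $(1-p)^{A_k} p^{C_k} q^{D_k} (1-q)^{B_k}$, while the backward factor equals $(1-p)^{A_k} q^{C_k} p^{D_k} (1-q)^{B_k}$. The elementary identity $C_k = D_k$ --- valid because in any closed cyclic binary sequence the numbers of $0\to1$ and $1\to0$ transitions coincide --- shows these two expressions are equal. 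Multiplying over all positions $k$ delivers the desired equality of the cyclic products, contradicting the strict inequality. Hence Theorem~1 applies and yields a metric $d$ on $X$ with $Pr(x|y) > Pr(x|z) \Rightarrow d(x,y) < d(x,z)$.

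To close, I verify Definition~1: given $x \in X$ and $y^* \in \argmin_{y \neq x} d(x,y)$, any $z \neq x$ with $Pr(x|z) > Pr(x|y^*)$ would, via the implication above, force $d(x,z) < d(x,y^*)$, contradicting the minimality of $y^*$; thus $y^* \in \argmax_{y \neq x} Pr(x|y)$, giving the required inclusion. The main obstacle in the plan is the cycle-equality identity itself; once the bitwise factorization of $Pr(x|y)$ and the parity fact $C_k = D_k$ are in place, the remainder is a straightforward invocation of Theorem~1 and routine bookkeeping.
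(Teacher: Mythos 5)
Your proof is correct and takes essentially the same approach as the paper: negate the cycle hypothesis of Theorem~1, reduce to comparing the forward and backward cyclic products of transition probabilities, and cancel coordinate by coordinate. The only cosmetic differences are that you argue multiplicatively and establish the per-coordinate cancellation by the transition-count parity $C_k = D_k$ (number of $0\to 1$ equals number of $1\to 0$ transitions in a cyclic binary sequence), whereas the paper takes logarithms and decomposes each coordinate sequence into runs of identical symbols --- the same fact with different bookkeeping --- and you add an explicit check of Definition~1 at the end, which the paper leaves to its Section~III discussion.
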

\begin{proof}
Let $X$ be a space of the codewords of length $m$. According to theorem from section I, if 
for each sequence of codewords $x_{0}, x_{1}, \ldots, x_{n-1}$ we will have:
$$Pr(x_{0}| x_{n-1}) > Pr(x_{0}: x_{1})$$
$$Pr(x_{1}| x_{0}) > Pr(x_{1}|x_{2})$$ 
$$\ldots$$ 
$$Pr(x_{n-2}|x_{n-3}) > Pr(x_{n-2}|x_{n-1})$$
implies: 
$$Pr(x_{n-1}|x_{n-2}) < P(x_{n-1}|x_{0})$$ 
there is a metric with required propery.
Le's assume that this is not true, so there is a sequence $x_{i}$ which satisfies 
all the inequalities from the premise but for which  
$$Pr(x_{n-1}|x_{n-2}) \geq P(x_{n-1}|x_{0})$$

Let $x_{i}(j)$ for $i \in \{0,\ldots, n-1\}$, $j \in \{0,\ldots, m-1\}$ be the $j-th$ symbol in $i-th$ codeword.
Probability of reception of the symbol codeword $x_{i}$ when $x_{(i+k) \bmod n}$, where $k \in \{1, -1\}$, was sent is then:
$$
Pr(x_{i}|x_{(i+k) \bmod n}) = \prod_{j=0}^{m-1}Pr(x_{i}(j)|x_{(i+k) \bmod m}(j))
$$
So inequalities from the premise can be written as
\begin{multline*}
\prod_{j=0}^{m-1}Pr(x_{i}(j)|x_{(i-1) \bmod m}(j))\times\\
\prod_{j=0}^{m-1}Pr(x_{i}(j)|x_{(i+1) \bmod m}(j))^{-1} > 1
\end{multline*}
for $i \in \{0,\ldots, n-2\}$
For convenience, let's move to logarithms, so we have for the same $i$:
\begin{multline*}
\sum_{j=0}^{m-1}(log(Pr(x_{i}(j)|x_{(i-1) \bmod m}(j)))-
\\log(Pr(x_{i}(j)|x_{(i+1) \bmod m}(j)))) > 0
\end{multline*}
and also
\begin{multline*}
\sum_{j=0}^{m-1}(log(Pr(x_{n-1}(j)|x_{n-2}(j)))-\\
log(Pr(x_{n-1}(j)|x_{0}(j)))) > 0
\end{multline*}
summing over $i \in \{0,\ldots, n-1\}$ we have:
\begin{multline*}
\sum_{i=0}^{n-1}\sum_{j=0}^{m-1}(log(Pr(x_{i}(j)|x_{(i-1) \bmod m}(j)))-\\
log(Pr(x_{i}(j)|x_{(i+1) \bmod m}(j)))) > 0
\end{multline*} 
Let's change the order of summation, and we will have 
\begin{multline*}
\sum_{j=0}^{m-1}\sum_{i=0}^{n-1}(log(Pr(x_{i}(j)|x_{(i-1)\bmod m}(j)))-\\
log(Pr(x_{i}(j)|x_{(i+1)\bmod m}(j)))) > 0
\end{multline*}
We claim however, that for each $j$ the inner sum is $0$:
\begin{multline*}
s_{j} = \sum_{i=0}^{n-1}(log(Pr(x_{i}(j)|x_{(i-1) \bmod m}(j)))-\\
log(Pr(x_{i}(j)|x_{(i+1) \bmod m}(j)))) = 0
\end{multline*}
so the total sum is $0$, what leads to a contradiction.
 
To show that the claim is true this, let's represent $\{x_{i}(j)\}_{i}$ of smbols $0$s and $1$s as concatenation of sequences 
consisting of identical symbols in such a way, that sequences we concatenate contain different syblos, 
e.g. sequence $(0,0,1,1,1,0)$ will be represented as conctatenation of sequences $(0,0)$, $(1,1,1)$, $0$ and let's
call this sequences respectively $X_{0}, X_{1},\ldots,X_{k-1}$ and it's elements by the the duble indexed $x_{l,i}$ wich represents. 
$i$-th element of $l$-th sequence. Because sum we consider is cyclic, we can also 
assume without loss of generality, that first and the last sequence consists of different symbols (shifting sequence cyclically if needed). 
Let's denote by $|X_{l}|$ length of the sequence $X_{l}$ We will also fix $j$ so we will not write it. We now have:

\begin{multline*}
s_{j} = \sum_{i=0}^{n-1} [log(Pr(x_{i}|x_{((i-1)\bmod m)}))-\\
log(Pr(x_{i}|x_{((i+1)\bmod m)}))] = \\
\sum_{i=0}^{n-1} log(Pr(x_{i}|x_{((i-1)\bmod m)}))-\\
\sum_{i=0}^{n-1} log(Pr(x_{i}|x_{((i+1)\bmod m)}))=\\
\sum_{l=0}^{k-1} [log(Pr(x_{l,0}|x_{((l-1)\bmod k),|X_{((l-1) \bmod k)}|-1})+\\
\sum_{i=1}^{|X_{l}|-1}log(Pr(x_{l,i}|x_{l,(i-1)})))] -\\
\sum_{l=0}^{k-1} [\sum_{i=0}^{|X_{l}|-2}log(Pr(x_{l,i}|x_{l,(i+1)}))+\\
log(Pr(x_{l,|X_{l}|-1}|x_{((l+1)\bmod k), 0}]
\end{multline*}
Let's observe, that for each $l$:
\begin{multline*}
log(Pr(x_{l,0}|x_{((l-1)\bmod k),|X_{((l-1) \bmod k)}|-1})+\\
\sum_{i=1}^{|X_{l}|-1}log(Pr(x_{l,i}|x_{l,(i-1)}))) = \\
\sum_{i=0}^{|X_{l}|-2}log(Pr(x_{l,i}|x_{l,(i+1)}))+\\
log(Pr(x_{l,|X_{l}|-1}|x_{((l+1)\bmod k), 0}))
\end{multline*}
bcause number of terms under the sign $\sum$ is the same and these terms are equal (since 
concatenated subsequences consisted of the same symbol. The terms outside the sum are equal, since
neighbour groups consists of different simbols, so on both sides their will be either $Pr(0|1)$ or $Pr(1|0)$  
This, together with contradiction pointed out earlier completes the proof. 
\end{proof}


%

\appendices
\section{Lemma from \cite{mmc}}
Following repeats the Lemma 7 from \cite{mmc} and it's proof
\begin{lemma}
Let $X$ be a finite set and $e:X^{2}\rightarrow \mathbb{R}$ be a semimetric, i.e.
satisfy following conditions:
$e(x,y) \geq 0$ for all $x,y \in X$
$e(x,y) = 0$ if and only if $x=y$
$e(x,y) = e(y,x)$ for all $x, y \in X$
then, there is a metric $d$ on $X$ such that $d(x,y) < d(x,z)$ 
if and only if $e(x,y) < e(x,z)$
\end{lemma}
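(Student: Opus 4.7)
My plan is to construct $d$ by composing $e$ with a strictly increasing function chosen so that the triangle inequality becomes automatic. Since $X$ is finite, the range of $e$ is a finite set of reals; enumerate its distinct values as $0 = v_{0} < v_{1} < \cdots < v_{N}$. The idea is to collapse the nonzero $v_{k}$ onto a strictly increasing sequence $0 = w_{0} < w_{1} < \cdots < w_{N}$ that is squeezed into a narrow band bounded away from $0$, for instance $w_{k} = 1 + k/(N+1)$ for $k \geq 1$, so that $w_{N} < 2 w_{1}$.

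Then I define $d(x,y) = w_{k}$ whenever $e(x,y) = v_{k}$. Symmetry of $d$ and the identity $d(x,y)=0 \iff x=y$ are inherited directly from the corresponding properties of $e$. For the triangle inequality, the only nontrivial case is $x,y,z$ pairwise distinct, in which case $d(x,z)$ and $d(z,y)$ each lie in the interval $[1,2)$, so $d(x,z) + d(z,y) \geq 2 > d(x,y)$. The remaining cases (when at least two of $x,y,z$ coincide) reduce to $d(a,b) \leq d(a,b) + 0$ and are trivial.

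The order-preservation clause is immediate from the construction: the map $v_{k} \mapsto w_{k}$ is strictly increasing, so $d(x,y) < d(x,z)$ is equivalent to $v_{k_{xy}} < v_{k_{xz}}$, which is the same as $e(x,y) < e(x,z)$.

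I do not foresee a serious obstacle. The only subtle point is arranging the $w_{k}$ so that the nonzero values of $d$ all lie in an interval whose upper endpoint is strictly less than twice its lower endpoint, which makes the triangle inequality vacuous for triples of distinct points. The finiteness of $X$ (hence of the range of $e$) is essential here, as it is what allows the simultaneous squeezing of all nonzero distances into such a band.
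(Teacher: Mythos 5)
Your proof is correct and follows essentially the same route as the paper: both rescale the nonzero values of $e$ by a strictly increasing map into a narrow band whose maximum is less than twice its minimum (the paper uses a monotone bijection onto $[1-\delta,1+\delta]$ with $\delta<\tfrac{1}{3}$, you use the discrete values $1+k/(N+1)$), which makes the triangle inequality automatic for triples of distinct points while preserving the order of distances. No gap; the only cosmetic difference is that you work with the finite list of values of $e$ directly rather than with an interval $[m,M]$.
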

\begin{proof}
Let $m=min\{e(x,y): x, y \in X, x\neq y\} > 0$ 
Let $M=max\{e(x,y): x, y \in X, x\neq y\}$
Let $\delta$ satisfying $0 <\delta <\frac{1}{3}$ be some number. 
Let $f$ be a strictly increasing bijective function 
$f:[m, M]\rightarrow[1-\delta, 1+\delta]$.
We define $d:X^2\rightarrow\mathbb{R}^{+}$ following manner:
$$
d_{x,y}=\begin{cases}
0& \text{if $x=y$ is odd},\\
f(e(x,y))& \text{otherwise}.
\end{cases}
$$
Symmetry, nonnegativity and accordance of inequalities between $e$ and $d$ 
is immediate consequence of the definition of $d$. It  also satisfies the 
triangle inequalirty since
\begin{multline*}
d(x,y)+d(y,z) \geq 2(1-\delta) > 2(1-\frac{1}{3})=\\
\frac{4}{3}>1+\delta \geq d(x,z)
\end{multline*}
so $d$ is a metric what finishes the proof of the lemma
\end{proof}


\ifCLASSOPTIONcaptionsoff
  \newpage
\fi

\end{document}